\pgfplotsset{compat=newest}
\title{XOX Fabric: A hybrid approach to blockchain transaction execution}
\author{\IEEEauthorblockN{Christian Gorenflo}
\IEEEauthorblockA{
University of Waterloo\\
Waterloo, Canada\\
Email: cgorenflo@uwaterloo.ca}
\and
\IEEEauthorblockN{Lukasz Golab}
\IEEEauthorblockA{
University of Waterloo\\
Waterloo, Canada\\
Email: lgolab@uwaterloo.ca}
\and
\IEEEauthorblockN{Srinivasan Keshav}
\IEEEauthorblockA{
University of Cambridge\\
Cambridge, UK\\
Email: sk818@cam.ac.uk}}
\date{June 2019}
\newtheorem*{theorem}{Hot Key Theorem}
\begin{document}
\maketitle
\begin{abstract}
    Performance and scalability are major concerns for blockchains: permissionless systems are typically limited by slow \emph{proof~of~X} consensus algorithms and sequential post-order transaction execution on every node of the network. By introducing a small amount of trust in their participants, permissioned blockchain systems such as Hyperledger Fabric can benefit from more efficient consensus algorithms and make use of parallel pre-order execution on a subset of network nodes. Fabric, in particular, has been shown to handle tens of thousands of transactions per second. However, this performance is only achievable for contention-free transaction workloads. If many transactions compete for a small set of hot keys in the world state, the effective throughput drops drastically. We therefore propose XOX: a novel two-pronged transaction execution approach that both minimizes invalid transactions in the Fabric blockchain and maximizes concurrent execution. Our approach additionally prevents unintentional denial of service attacks by clients re-submitting conflicting  transactions. Even under \textit{fully} contentious workloads, XOX can handle more than 3000 transactions per second, \textit{all} of which would be discarded by regular Fabric.
\end{abstract}

\section{Introduction}
Blockchain systems have substantially evolved from their beginnings as tamper-evident append-only logs. With the addition of \emph{smart contracts}, complex computations based on the blockchain's state become possible. In fact, several  permissionless and permissioned systems such as Ethereum~\cite{Founder2017} and Hyperledger Fabric~\cite{Androulaki2018a} allow Turing-complete computations. However, uncoordinated execution of smart contracts in a decentralized network can result in inconsistent blockchains, a fatal flaw. Fundamentally,  blockchain systems have two options to resolve such conflicts. They can either coordinate, i.e., execute contracts after establishing consensus on a linear ordering, or they can deterministically resolve inconsistencies after parallel execution.

Most existing blockchain systems implement smart contract execution after ordering transactions, giving this pattern the name \emph{order-execute}~(OX). In these systems, smart contract execution happens sequentially. 
This allows each execution to act on the result of the previous execution, but restricts the computation to a single thread, limiting performance. Blockchains using this pattern must additionally guarantee that the smart contract execution reaches the same result on every node in the network that replicates the chain, typically by requiring smart contracts to be written in a domain-specific deterministic programming language. This restricts programmability. Moreover, this makes the use of external data sources, so-called \emph{oracles}, difficult, because they cannot be directly controlled and may deliver different data to different nodes in the network.

Other blockchain systems,  most notably Hyperledger Fabric, use an \emph{execute-order}~(XO) pattern. Here, smart contracts referred to by transactions are executed in parallel in a container before the ordering phase. Subsequently, only the results of these computations are ordered and put into the blockchain. Parallelized smart contract execution allows, among other benefits, a nominal transaction throughput orders of magnitude higher than that of other blockchains~\cite{Dinh2017d}. However, a model that executes each transaction in parallel is inherently unable to detect transaction conflicts\footnote{Two transactions are said to conflict if either one reads or writes to a key that is written to by the other.} during execution.

Prior work on contentious workloads in Fabric focuses on detecting conflicting transactions during ordering and aborting them early.
However, this tightly couples architecturally distinct parts of the Fabric network, breaking its modular structure. Furthermore, early abort only treats a symptom and not the cause in that it only filters out conflicting  transactions instead of preventing their execution in the first place. This approach will not help if many transactions try to modify a small number of `hot' keys. For example, suppose the system supports a throughput of 1000 transactions per second. Additionally, suppose 20 transactions try to access the same key in each block of 100 transactions. Then, only one of the 20 transactions will be valid and the rest must be aborted early. Subsequently, all 19 aborted clients will attempt to re-execute their transactions, adding to the 20 new conflicting transactions in the next block. This leads to 38 aborted transactions in the next round, and so on. Clearly, with \textit{cumulative re-execution}, the number of aborted transactions grows linearly until it surpasses the throughput of the system. Thus, if clients re-execute aborted transactions, their default behaviour, this effectively becomes an unintentional denial of service attack on the blockchain! 

This inherent problem with the XO pattern greatly reduces the performance of uncoordinated marketplaces or auctions. For example, conflicting transactions cannot be avoided in use cases such as payroll, where an employer transfers credits to a large number of employees periodically, or energy trading, where a small number of producers offer fungible units of energy to a large group of consumers.

We therefore propose XOX Fabric that essentially adds a second deterministic re-execution phase to Fabric. This phase executes `patch-up code' that must be added to a smart contract. We show that, in many cases, this eliminates the need to re-submit and re-execute conflicting  transactions. Our approach can deal with highly skewed contentious workloads with a handful of hot keys, while still retaining the decoupled, modular structure of Fabric. Our contributions are as follows:
\begin{itemize}
    \item \emph{Hybrid execution model:} Our \emph{execute-order-execute}~(XOX) model allows us to choose an optimal trade-off between concurrent high-performance execution and consistent linear execution, preventing the cumulative re-execution problem.
    \item \emph{Compatibility with external oracles:} To allow the use of external oracles in the deterministic second execution phase, we gather and save oracle inputs in the pre-order execution step. 
    \item \emph{Concurrent transaction processing:} By computing a DAG of transaction dependencies in the post-order phase, Fabric peers can maximize parallel transaction processing.
Specifically, they not only parallelize transaction validation and commitment, making full use of modern multi-core CPUs, but also re-execute transactions in parallel as long as these transactions are not dependent on each other. This alleviates the execution bottleneck of OX blockchains.
\end{itemize}
We achieve these contributions while being fully legacy-compatible and without affecting Fabric's modularity. In fact, XOX can replace an existing Fabric network setup by changing only the Fabric binaries\footnote{Source code available at \url{https://github.com/cgorenflo/fabric/tree/xox-1.4}}.

\section{Background}
\subsection{State machine replication and invalid state transitions}

We can model blockchain systems as state machine replication mechanisms~\cite{Schneider1990}. Each node in the network stores a replica of a state machine, with the genesis block as its \texttt{START} state. Smart contracts then become state transition functions. They take client requests (transactions) as input and compute a state transition which can be subsequently committed to the world state. This world state is either implicitly created by the data on the blockchain or explicitly tracked in a data store, most commonly a key-value store. Because of blockchain's inherently decentralized nature, keeping the world state consistent on all nodes is not trivial. A node's stale state, a client's incomplete information, parallel smart contract execution, or malicious behaviour can  produce conflicting state transitions. Therefore, a blockchain's execution model must prevent such transactions from modifying the world state. There are two possibilities to accomplish this, the OX and XO approaches, that we have already outlines. In the next two subsections, we explore some
subtleties of each approach.
    
\subsection{The OX model}

The \emph{order-execute~(OX)} approach guarantees consensus on linearization of transactions in a block. However, it requires certain restrictions on the execution engine to  guarantee that each node makes identical  state transitions. First, the output of the execution engine must be deterministic. This requires the use of a deterministic contract language, such as Ethereum's Solidity, which must be learned by the application developer community. It also means that external oracles cannot easily be incorporated because different nodes in the network may receive different information from the oracle. Second, depending on the complexity of smart contracts, there needs to be a mechanism to deal with the \emph{halting problem}, i.e., the inherent \textit{a priori }unknowability of contract execution duration. 
A common solution to this problem is the inclusion of an execution fee like Ethereum's \textit{gas}, which aborts long-running contracts.

\subsection{The XO model}
The \emph{execute-order~(XO)} model approach allows transactions to be executed in arbitrary order: the resulting state transitions are then ordered and aggregated into blocks. This allows transactions to be executed in parallel, increasing throughput. However, the world state computed at the time of state transition commitment is known to execution engines only after some delay, and all transactions are inevitably executed on a stale view of the world state. This makes it possible for transactions to result in invalid state transitions even though they were executed successfully before ordering. It necessitates a validation step after ordering so transitions can be invalidated deterministically based on detected conflicts. Consequently, for a transaction workload with a set of frequently updated keys, the effective throughput of an XO system can be significantly lower than the nominal throughput (we formalize this as the \textit{Hot Key Theorem }below).

\subsection{Hyperledger Fabric}
Hyperledger Fabric has been described in detail by Androulaki \emph{et al}~\cite{Androulaki2018a}. Below, we describe those parts of the Fabric architecture that are relevant to this work.

A Fabric network consists of \emph{peer} nodes replicating the blockchain and the world state, and a set of nodes called the \emph{ordering service} whose purpose is to order transactions into blocks. The world state is a key-value view of the state created by executing transactions. The nodes can belong to different organizations collaborating on the same blockchain. Because of the strict separation of concerns, Fabric's blockchain model is independent of the consensus algorithm in use. In fact, release 1.4.1 supports three plug-in algorithms, \emph{solo, Kafka} and \emph{Raft}, out of the box. As we will show in section~\ref{sec:xox} we preserve Fabric's modularity completely.

Apart from replication and ordering, Fabric needs a way to execute its equivalent of smart contracts, called \emph{chaincode}. \emph{Endorsers}, a subset of peers, fill this role. Each transaction proposal received by an endorser is \emph{simulated} in isolation. A successful simulation of arbitrarily complex chaincode results in a read and write set (RW set) of \emph{\{key, value, version\}} tuples. They act as instructions for transitions of the world state. The endorser then appends the RW set to the initial proposal, signs the response, sends it back to the requesting client, and discards the world state effect of the simulated transaction before executing the next one.

To combat non-determinism and malicious behaviour during chaincode execution, \emph{endorsement policies} can be set up. For example, a client may be required to collect identical responses from three endorsers across two different organizations before sending the transaction to the ordering service. 

After transactions have been ordered into blocks, they are disseminated to all peers in the network. These peers first independently perform a syntactic validation of the blocks and verify the endorsement policies. Lastly, they sequentially compare each transaction's RW set to the current view of the world state. If the version number of any key in the set disagrees with the world state, the transaction is discarded. Thus, any RW set overlap across transactions in the same block leads to an invalidation of all but the first conflicting transaction. As a consequence of this execution model, Fabric's blockchain also contains invalid transactions, which every peer independently flags as such during validation and ignores during commitment to world state. In the worst case, all transactions in a block might be invalid. This can drastically reduce the effective transaction throughput of the system.

\section{Related Work}
Performance is an important issue for blockchain systems since they are still slower than traditional database systems~\cite{Chen2018,Dinh2017d}. While most research focuses on consensus algorithms, less work has been done to optimize other aspects of the transaction flow, especially transaction execution.

We base this work on \emph{FastFabric}, our previous optimization of Hyperledger Fabric~\cite{Gorenflo2020}. We introduced efficient data structures, caching, and increased parallelization in the transaction validation pipeline to increase Fabric's throughput for \emph{conflict-free} transaction workloads by a factor six to seven. 
In this paper, we address the issue of conflicting transactions.

To the best of our knowledge, a document from the Fabric  community~\cite{Sorniotti} is the first to propose a secondary post-order execution step for Fabric. However, the allowed commands were restricted
to addition, subtraction, and checking if a number is within a certain interval. Furthermore, this secondary execution step is always triggered regardless of the workload, and is not parallelized.
This diminishes the value of retaining the first pre-order execution step and introduces the same bottleneck that OX models have to deal with.

Nasirifard \emph{et al}~\cite{Nasirifard2019} take this idea one step further. By introducing conflict-free replicated data types (CRDTs), which allow conflicting transactions to be merged during the validation step, they follow a similar path to our work. However, their solution has several limitations. It can only process transactions sequentially, one block at a time. When conflicts are discovered, they use the inherent functionality of CRDTs to resolve them. While this enables efficient computation, it also restricts the kind of execution that can be done. For example, it is not possible to check a condition like negative account balance before merging two transaction results.

Amiri \emph{et al}~\cite{Amiri2019} introduce ParBlockchain using a similar architecture to Fabric's but with an OX model. Here, the ordering service also generates a dependency graph of the transactions in a block. Subsequently, transactions in the new block are distributed to nodes in the network to be executed, taking the dependencies into account. Only a subset of nodes executes any given transaction and shares the result with the rest of the network. This approach has two drawbacks. First, the ordering service must determine the transaction dependencies before they are executed.  This requires the orderers to have complete knowledge of all installed smart contracts, and, as a result, restricts the complexity of allowed contracts. Even if a single conditional statement relies on a state value, for example \emph{Read the value of key $k$, where $k$ is the value to be read from key $k'$}, reasoning about the result becomes impossible. Second, depending on the workload, all nodes may have to communicate the current world state after every transaction execution to resolve execution deadlocks. This leads to a significant networking overhead. 

CAPER \cite{Amiri2019b} extends ParBlockchain by modelling the blockchain as a directed acyclic graph (DAG). This enables sharding of transactions. Each shard maintains an internal chain of transaction that is intertwined with a global cross-shard chain. Both internal chains and the global chain are totally ordered. This approach works well for scenarios with tightly siloed data pockets that can easily be sharded and where cross-shard transactions are rare. In this case, internal transactions of different shards can be executed in parallel. However, if the workload does not have clear boundaries to separate the shards, then most transactions will use the global chain, negating the benefit of CAPER.

Sharma \emph{et al}~\cite{Sharma2019} approach blockchains from a database point of view and incorporate concepts such as early abort and transaction reordering into Hyperledger Fabric. However, they do not follow its modular design and closely couple the different building blocks. For both early abort and transaction reordering, the ordering service needs to understand  transaction contents to unpack and analyze RW sets. Furthermore, transaction reordering only works in pathological cases. Whenever a key appears both in the read and write set, which is the case for any application that transfers any kind of asset, reordering will not eliminate RW set conflicts. While early transaction abort might increase overall throughput slightly, it cannot solve the problem of hot keys and only skews the transaction workload away from those keys.

Zhang \emph{et al}~\cite{Zhang2019} present a solution for a client-side early abort mechanism for Fabric. They introduce a transaction cache on the client that analyzes endorsed transactions to detect RW set conflicts and only sends conflict-free transactions to the ordering service. Transactions that have dependencies are held in the cache until the conflict is resolved and then they are sent back to the endorsers for re-execution. This approach prevents invalid transactions from a single client, but cannot deal with conflicts between multiple clients. Moreover, it cannot deal with hot key workloads.

Lastly, Escobar \emph{et al}~\cite{Escobar2019} investigate parallel state machine replication. They focus on efficient data structures to keep track of parallelizable, i.e., independent state transitions. While this might be interesting to incorporate into Fabric in the future, we show in section~\ref{sec:exp} that the overhead of our relatively simple implementation of a dependency tracker is negligible compared to the transaction execution overhead.

\section{The Hot Key Theorem}
We now state and prove a theorem that limits the performance of any XO system.
\begin{theorem}

	Let $\overline{l}$ be the average time between a transaction's execution and its state transition commitment. Then the average effective throughput for all transactions operating on the same key is at most $\frac{1}{\overline{l}}$.
\end{theorem}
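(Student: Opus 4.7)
The plan is to fix a single hot key $k$ and restrict attention to the sequence of successfully committed transactions that write to $k$. Number these in commit order as $T_1, T_2, \ldots$, and let $e_i$ and $c_i$ denote the execution and commit times of $T_i$. By definition of $\overline{l}$, the empirical mean of $c_i - e_i$ converges to $\overline{l}$, so the bound on effective throughput will follow once I lower bound the average inter-commit gap $c_{i+1} - c_i$.

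The central step is to show that $e_{i+1} \geq c_i$ for every $i$. Suppose instead that $T_{i+1}$ were executed strictly before $T_i$ committed. Then $T_{i+1}$'s recorded read set captures the version of $k$ that existed before $T_i$'s write. During $T_{i+1}$'s post-order validation, Fabric compares that recorded version against the world state, which has already absorbed $T_i$'s update; the version numbers disagree, and by the validation rule described in the Fabric background the peer must invalidate $T_{i+1}$, contradicting the assumption that $T_{i+1}$ was successfully committed.

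Given the invariant $e_{i+1} \geq c_i$, I immediately get $c_{i+1} - c_i \geq c_{i+1} - e_{i+1}$. Summing over the first $n-1$ gaps telescopes on the left to $c_n - c_1 \geq \sum_{i=2}^{n} (c_i - e_i)$. Dividing by $n-1$ and letting $n$ grow, the right-hand side tends to $\overline{l}$, while the left-hand side is the reciprocal of the long-run rate of successful commits on $k$. Hence the effective throughput on key $k$ is at most $1/\overline{l}$.

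The hard part is making the invariant $e_{i+1} \geq c_i$ genuinely airtight: one has to rule out any loophole in which a transaction executed against a stale view of $k$ nevertheless passes the post-order version check, for instance if concurrent validators in the commitment pipeline could reorder their version updates relative to their validations of other transactions in the same block. Once the version check is argued to be monotone in the sense that any write to $k$ becomes visible to every later validation, the counting and averaging argument is essentially immediate, and the rest reduces to a routine limit statement about averages.
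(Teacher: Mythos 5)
Your proposal is correct and takes essentially the same route as the paper: your invariant $e_{i+1} \geq c_i$ (a committed update to $k$ must have executed after the previous committed update, or the version check would invalidate it) is exactly the paper's inequality $t_i \leq t_{i+1} - l_i$, and your telescoping sum reproduces the paper's bound $t_N \geq \sum_{i=0}^{N-1} l_i$. The only cosmetic difference is that the paper packages the accumulation as an induction rather than an explicit telescoping argument.
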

\begin{proof}
	The proof is by induction. Let $i$ denote the number of changes to an arbitrary but fixed key $k$.
	
	$i=0$ (just before the first change):
	
	For $k$ to exist, there must be exactly one transaction $tx_0$ which takes time $l_0$ from execution to commitment and creates $k$ with version~$v_1$ at time $t_1$.
	
	$i \rightarrow i+1$ (just before the $i+1^{th}$ change):
	
	Let $k$'s current version be $v_i$ at time $t_i$. Let $tx_i$ be the first transaction in a block which updates $k$ to a new version $v_{i+1}$. The version of $k$ during $tx_i$'s execution must have been $v_i$, otherwise Fabric would invalidate $tx_i$ and prevent commitment. Let $tx_i$ be committed at time $t_{i+1}$ and $l_i$ be the time between $tx_i$'s execution and commitment.  Therefore,  
	$$t_i \leq t_{i+1} - l_i.$$
	Likewise, no transaction $tx'_i$ which is ordered after $tx_i$ can commit an update $v_i \rightarrow v'_{i+1}$ because $tx_i$ already changed the state and $tx'_i$ would therefore be invalid. Consequently, $tx_i$ must be the only transaction able to update $k$ from $v_i$ to a newer version.
	
	This means, $N$ updates to $k$ take $t_{N}$ time with
	$$t_{N} \geq \sum_{i=0}^{N-1} l_i.$$
	
	A lower bound on the average update time is then given by
	$$\frac{1}{N}t_{N}\geq\sum_{i=0}^{N-1}\frac{1}{N} l_i= \overline{l},$$
	so we get $\frac{1}{\overline{l}}$ as an upper bound on throughput being the inverse of the update latency.
	\end{proof}

This theorem has a crucial consequence. For example, FastFabric can achieve a nominal throughput of up to 20,000 transactions per second~\cite{Gorenflo2020}, yet even an unreasonably fast transaction life cycle of \SI{50}{\milli\second} from execution to commitment would result in a maximum of 20 updates per second to the same key, or once every ten blocks with a block size of 100 transactions. Worse yet, transactions are not only invalidated if their RW set overlaps completely, but also if there is a \emph{single} key overlap with a previous transaction. This means that workloads with hot keys can easily reduce effective throughput by several orders of magnitude.

While early abort schemes can discard invalid transactions before they become part of a block, they cannot break the theorem. Assuming they result in blocks without invalid transactions, they can only fill up the slots in a new block with transactions using different key spaces. Thus, they skew the processed transaction distribution.
Furthermore, aborted transactions need to be re-submitted and re-executed, flooding the network with even more attempts to modify hot keys. Eventually, endorsers will be completely saturated by clients repeatedly trying to get their invalid transactions re-executed.
    
\section{The XOX hybrid model}
\label{sec:xox}

To deal with the drawbacks of both the OX and XO patterns, we now present the \emph{execute-order-execute~(XOX)} pattern which adds a secondary post-order execution step to execute
the patch-up code added to smart contracts.
XOX minimizes transaction conflicts while preserving concurrent block processing and without the introduction of any centralized elements. 
In this section, we first describe the necessary changes to the endorsers' pre-order execution step to allow the inclusion of external oracles in the post-order execution step. Then, we describe changes to the critical transaction flow path on the peers after they receive blocks from the ordering service. The details of the crucial steps we introduce are described in sections~\ref{sec:dep} and~\ref{sec:x}. Notably, our changes do not affect the ordering service, preserving Fabric's modular structure.

\subsection{Pre-order endorser execution}
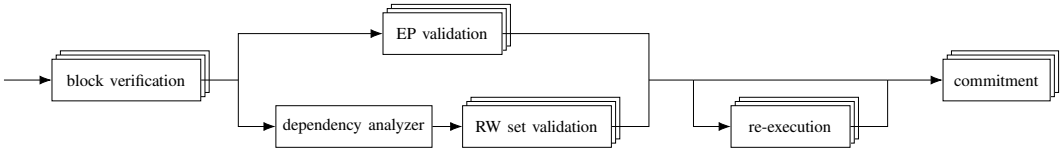
\begin{figure*}
    \centering
    \resizebox{0.8\textwidth}{!}{
    	\begin{tikzpicture}[
	    >=serif cm,font = \footnotesize,
	    node distance=0.3cm and 0.5cm,
	    map/.style={draw, rectangle, align=left, minimum height = 7mm   },
	    arrow/.style={-{Latex[length=2mm]}},
	]
	\node[](ord){};
	
	\node[map,right=0.8cm of ord, fill=white, minimum width=25mm](v2){block verification};
	\begin{scope}[on background layer]
        \node[map,above right=2mm of v2.south west, fill=white, minimum width=25mm](v1){};
	    \node[map,above right=1mm of v2.south west, fill=white, minimum width=25mm](v3){};
    \end{scope}

    \draw[arrow](ord)--(v2);

	\node[right=of v2](c1){};
	
	\draw[-](v2) --(c1.center);
	
	\node[map,above right=0.3cm and 2.3cm of c1,fill=white, minimum width=20mm](ep2){EP validation};
	
	\draw[arrow](c1.center) |-(ep2);
	
	\begin{scope}[on background layer]
    	\node[map,above right=2mm of ep2.south west,fill=white, minimum width=20mm](ep1){};
    	\node[map, above right=1mm of ep2.south west,fill=white, minimum width=20mm](ep3){};
    \end{scope}

	\node[map,below right=of c1](dep){dependency analyzer};
	
	\draw[arrow](c1.center) |-(dep);
	
	\node[map,right=of dep,fill=white, minimum width=25mm](rw){RW set validation};
	
	\draw[arrow](dep) --(rw);
	
	\begin{scope}[on background layer]
    	\node[map,above right=2mm of rw.south west,fill=white,, minimum width=25mm]{};
    	\node[map, above right=1mm of rw.south west,fill=white,, minimum width=25mm]{};
    \end{scope}
    
    \node[above right=of rw](c2){};
    
    \draw[-](ep2)-|(c2.center);
    \draw[-](rw)-|(c2.center);
    \node[right=of c2](c3){};
    
    \draw[-](c2.center)--(c3.center);
    
    \node[map,below right=of c3,fill=white, minimum width=20mm](x){re-execution};
    
    \begin{scope}[on background layer]
    	\node[map,above right=2mm of x.south west,fill=white,, minimum width=20mm]{};
    	\node[map, above right=1mm of x.south west,fill=white,, minimum width=20mm]{};
    \end{scope}
    
    \node[above right=of x](c4){};
    
    \node[map,right=0.8cm of c4, fill=white, minimum width=18mm](com){commitment};
    
       \begin{scope}[on background layer]
    	\node[map,above right=2mm of com.south west,fill=white,, minimum width=18mm]{};
    	\node[map, above right=1mm of com.south west,fill=white,, minimum width=18mm]{};
    \end{scope}
    
    \draw[arrow](c3.center)|-(x);
    \draw[-](c3.center)--(c4.center);
    \draw[-](x)-|(c4.center);
    \draw[arrow](c4.center)--(com);

	\end{tikzpicture}
    }
    \caption{The modified XOX Fabric validation and commitment pipeline. Stacks and branched paths show parallel execution.}
    \label{fig:pipeline}
\end{figure*}
The pre-order execution step leverages concurrent transaction execution and uses general purpose programming languages like \emph{Go}. Depending on the endorsement policy, clients request multiple endorsers to execute their transaction and the returned execution results must be identical. This makes a deterministic execution environment unnecessary because deviations are discarded and a unanimous result from all endorsers becomes ground truth for the whole network. Notably, this also allows external oracles like weather or financial data. If these oracle data lead to non-deterministic RW sets, the client will not receive identical endorser responses and the transaction will never reach the Fabric network.

External oracles are a powerful tool. If they are supported by the pre-order execution step, they must also be supported by the post-order execution step.
To achieve this, we must make the oracle deterministic. We leverage the same mechanism that ensures deterministic transaction results for  pre-order execution: We extend the transaction response by an additional \emph{oracle set}. Any external data are recorded in the form of key-value pairs and are added to the response to the client. Now, if the oracle sets for the same transaction executed by different endorsers differ, the client has to discard the transaction. Otherwise, the external data effectively becomes part of the deterministic world state so that it can be used without by the post-order execution step. Analogous to existing calls to \texttt{GetState} and \texttt{PutState} that record the read and write set key-value pairs, respectively, we add a new call \texttt{PutOracle} to the chaincode API.

\subsection{Critical transaction flow path}
Our previous work on FastFabric~\cite{Gorenflo2020} showed how to improve performance by pipelining the syntactic block verification and endorsement policy validation (EP validation) so that it can be done for multiple blocks at the same time. However, the RW set validation to check for invalid state transitions and the final commitment had to be done sequentially in a single thread. While the the XOX model is an orthogonal optimization, its second execution step needs to be placed between RW set validation and commitment. Since this step is relatively slow, we must expand our concurrency efforts to pipelining RW set validation, post-order executions, and commitment. Two vital pieces for this effort, a transaction dependency analyzer and the executions step itself, are described in later sections in detail, so we will only give a brief overview here. This allows us to concentrate on the pipeline integration in this section.

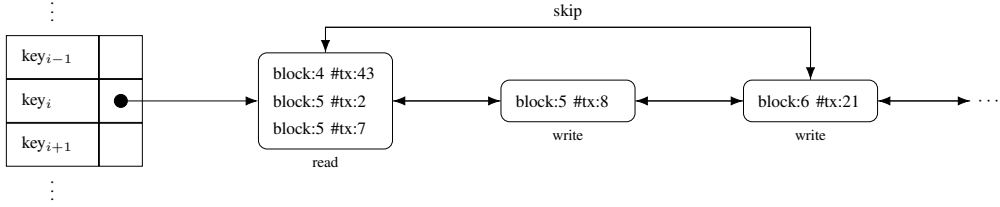
\begin{figure*}
    \centering
    \resizebox{0.75\textwidth}{!}{
    	\begin{tikzpicture}[
	    >=serif cm,font = \footnotesize,
	    map/.style={draw, rectangle, align=left, node distance = 7mm and 0.1cm, minimum height = 7mm, minimum width=1.5cm, text width=1cm, node distance=0mm and 0mm
	    },
	    list/.style={node distance = 0mm and 2cm, text width = 1.7cm, minimum width=1.5cm},
	    arrow/.style={-{Latex[length=2mm]}},
	]
	\node[map](key0){key$_{i-1}$};
	\node[above=1mm of key0]{\vdots};
	\node[map,below=of key0](key1){key$_i$} ;
	
	\node[map,right=of key1,minimum width=7mm, text width = 0cm](anchor){};
	\filldraw [black] (anchor) circle (3pt);
	\node[map,above=of anchor,minimum width=7mm, text width = 0cm]{};
	\node[map,below=of anchor,minimum width=7mm, text width = 0cm]{};
	\node[map,below=of key1](key2){key$_{i+1}$} ;
	\node[below=-1mm of key2]{\vdots};
	\node[list,right=of anchor](l2){block:5 \#tx:2};
	\node[list,above=of l2](l1){block:4 \#tx:43};
	\node[list,below=of l2](l3){block:5 \#tx:7};
	
	\node[rectangle,rounded corners, draw, fit=(l1)(l2)(l3)](r){};
	\node[below right=0mm of r.south, anchor=north, font=\scriptsize]{read};
	\node[list,right=of l2](l4){block:5 \#tx:8};
	\node[rectangle,rounded corners, draw, fit=(l4)](w1){};
	\node[below right=0mm of w1.south, anchor=north, font=\scriptsize]{write};
	
	\node[list,right=of l4](l5){block:6 \#tx:21};
	\node[rectangle,rounded corners, draw, fit=(l5)](w2){};
	\node[below right=0mm of w2.south, anchor=north, font=\scriptsize]{write};
	
	\draw[arrow] (anchor.center) -- (r);
	\draw[arrow] (r) -- (w1);
	\draw[arrow] (w1) -- (r);
	\draw[arrow] (w1) -- (w2);
	\draw[arrow] (w2) -- (w1);
	\draw[arrow] (r) -- ($(r.center) + (0,1.2)$) -- ($(w2.center) + (0,1.2)$)node [midway, above]{skip} -- (w2);
	\draw[arrow] (w2) -- ($(w2.center) + (0,1.2)$) -- ($(r.center) + (0,1.2)$)-- (r);
	
	\node[right=1.5cm of w2](etc){$\cdots$};
	\draw[arrow] (w2) -- (etc);
	\draw[arrow] (etc) -- (w2);
	
	\end{tikzpicture}
    }
    \caption{Dependency analyzer data structure: Example of a state database key mapped to a doubly-linked skiplist of dependent transactions. }
    \label{fig:dp}
\end{figure*}

\subsubsection{Dependency analyzer}
For concurrent transaction processing, we rely on the ability to isolate transactions from each other. However, the sequential order of transactions in a block matters when their RW sets are validated and they are committed. A dependency exists when two transactions overlap in some keys of their RW sets (read-only transactions do not even enter the orderer). In that case, we cannot process them independently. Therefore, we need to keep track of dependencies between transactions so we know which subsets of transactions can be processed concurrently. 

\subsubsection{Execution step}
Transactions for which the dependency analyzer has found a dependency on an earlier transaction would be invalidated during Fabric's RW set validation. We introduce a step which re-executes transaction with such an RW set conflict based on the most up-to-date world state. It can resolve conflicts due to a lack of knowledge of concurrent transactions during  pre-order execution. However, it still invalidates transactions that attempt something the smart contract does not allow, such as creating a negative account balance.

In FastFabric, peers receive blocks as fast as the ordering service can deliver them. If the syntactic verification of a block fails, the whole block is discarded. Thus, it is reasonable to keep this as a first step in the pipeline. Next up is the EP validation step. Each transaction can be validated in parallel because the validations are independent of each other. The next step is the intertwined RW set validation and commitment: Each transaction is validated, and, if successful, added to an update batch that is subsequently committed to the world state.

XOX Fabric separates RW set validation from the commitment decision. Therefore, this step is no longer dependent on the result of the EP validation and can be done in parallel. However, in order to validate transactions concurrently, we need to know their dependencies, so the dependency analyzer goes first and releases transactions to the RW set validation as their dependencies are resolved.

Subsequently, the results from the EP validation and RW set validation are collected, and if they are marked as valid, they can be committed concurrently. If a RW set conflict arises, they need to be sent to the new execution step to be re-executed based on the current world state. Finally, successfully re-executed transactions are committed and all others are discarded. 

Our design allows dependency analysis to work in parallel to endorsement policy validation and transactions can proceed as soon as all previous dependencies are known. Specifically, independent sets of transactions can pass through RW set validation, post-order execution, and commitment steps concurrently. The modified pipeline is shown in Fig.~\ref{fig:pipeline}.

\section{Dependency analyzer}
\label{sec:dep}
We now discuss the details of the dependency analyzer. Note that the only way for a transaction to have a dependency on another is an overlap in its RW set with a previous transaction. More precisely, one of the conflicting transactions must write to the overlapping key. Reads do not change the version nor the value of a key, so they do not impede each other. However, we must consider a write a blocking operation for that key. If transaction $a$ with a write is ordered before transaction $b$ with a read from the same key, then this must always happen in this order lest we lose deterministic behaviour of the peer because of the changing key value. The reverse case of read-before-write has the same constraints. In the write-write case,  neither transaction actually relies on the version or the value of that key. Nevertheless, they must remain in the same order, otherwise transaction $a$'s value might win out, even though transaction $b$ should overwrite it.

To detect such conflicts, we keep track of read and write accesses to all keys across transactions. For each key, we create a doubly-linked skip list that acts as a dependency queue, recording all transactions that need to access it. Entries in this queue are sorted by the blockchain transaction order. As described before, consecutive reads of the same key do not affect each other and can be collapsed into a single node in the linked list so they will be freed together. For faster traversal during insertion, nodes can skip to the start of the next block in the list. This data structure is illustrated in Fig.~\ref{fig:dp}. After the analysis of a transaction is complete, it will not continue to the next step in the pipeline until all previous transactions have also been analyzed, lest an existing dependency might be missed.

Dependencies may change in two situations: when new transactions are added or existing transactions have completed the commitment pipeline. In either case, we update the dependency lists accordingly and check the first node of lists that have been changed. If any of these transactions have no dependency in any key anymore, they are released into the validation pipeline. However, we can only remove a transaction from the dependency lists once it is either committed or discarded, lest dependent transactions get freed up prematurely.

\section{Post-order execution step}
\label{sec:x}
The post-order execution step executes additional patch-up code added to a smart contract. We discuss it in more detail in this section. 

When the RW validation finds a conflict between a transaction's RW set and the world state, that transaction will be re-executed and possibly salvaged using the patch-up code. However, the post-order execution stage needs to adhere to some constraints. First, the new RW set must be a subset of the original RW set so the dependency analyzer can reason properly. Without this restriction, new dependencies could emerge and transactions scheduled for parallel processing would now create an invalid world state. Second, the blockchain network also needs consistency among peers. Therefore, the post-order execution must be deterministic so there is no need for further consensus between peers. Lastly, this new execution step is part of the critical path and thus should be as fast as possible.

For easier adoption of smart contracts from other blockchain systems, we use a modified version of Ethereum's EVM~\cite{Founder2017} as the re-execution engine for patch-up code\footnote{We note that forays have been made to build WebAssembly based execution engines~\cite{EWASM}, which would allow for a variety of programming languages to build smart contracts for the post-order execution step.}. Patch-up code take a transaction's read set and oracle set as input. The read set is used to get the current key values from the latest version of the world state. Based on this and the oracle set, the smart contract then performs the necessary computations to generate a new write set. If the transaction is not allowed by the logic of the smart contract based on the updated values, it is discarded. Finally, in case of success, it generates an updated RW set, which is then compared to the old one. If all the keys are a subset of the old RW set, the result is valid and can be committed to the world state and blockchain.

For example, suppose client $A$ wants to add 70 digital coins to an account with a current balance of 20 coins. Simultaneously, client $B$ wants to add 50 coins to the same account. They both have to read the key of the account, update its value, and write the new value back, so the account's key is in both transactions' RW set. Even if both clients are honest, only the transaction which is ordered earlier would be committed. Without loss of generality, assume that $A$'s transaction updates the balance to 90 coins because it won the race. In XOX Fabric, $B$'s transaction would wait for $A$ to finish due to its dependency and then would find a key version conflict in the RW validation step. Therefore, it is sent to the post-order execution step. In the step, $B$'s patch-up code can read the updated value from the database and add its own value for a total of 140 coins, which is recorded in its write set. After successful execution, the RW set comparison is performed and the new total will be committed. Thus, the re-execution of the patch-up code salvages conflicting transactions.

However, if we start with an account balance of 100 coins and $A$ tries to subtract 50 coins and $B$ tries to subtract 60 coins, we get a different result. Again, $B$'s transaction would be sent to be re-executed. But this time, it's patch-up code tries to subtract 60 coins from the updated 50 coins and the smart contract does not allow a negative balance. Therefore, $B$'s transaction will be discarded, even though it was re-executed based on the current world state.

Thus, our hybrid XOX approach can correct transactions which would have been discarded because they were executed based on a stale world state. However, transactions that do not satisfy the smart contract logic are still rejected.

Lastly, if we do not put any restrictions on the execution, we risk expensive computations, low throughput, and even non-terminating smart contracts. Ethereum deals with this by introducing \emph{gas}. If a smart contract runs out of gas, it is aborted and the transaction is discarded. As of yet, Fabric does not include such a concept.

As a solution, we introduce \emph{virtual gas} as a tuning parameter for system performance. Instead of originating from a bid by the client that proposes the transaction, it can be set by a system administrator. If the post-order step runs out of gas for a transaction, it becomes immediately invalidated, but in case of success the fee is never actually paid. A larger value allows for more complex computation at the cost of throughput. While the gas parameter should generally be as small as possible, large values could make sense for workloads with very infrequent transaction conflicts and high importance of conflict resolution.

\section{Experiments}

\label{sec:exp}
We now evaluate the peformance of XOX Fabric. We used 11 local servers connected by a 1~Gbit/s switch. Each is equipped with two Intel\textsuperscript{\tiny\textregistered}~Xeon\textsuperscript{\tiny\textregistered}~CPU~E5-2620~v2 processors at 2.10~GHz, for a total of 24 hardware threads and 64~GB of RAM. 
We compare three systems with different capabilities. Fabric 1.4 is the baseline. Next, FastFabric~\cite{Gorenflo2020} adds efficient data structures, improved parallelization, and decoupled endorsement and storage servers. Finally, our implementation of an XOX model based on FastFabric adds transaction dependency analysis, concurrent key version validation, and transaction re-execution.

For comparable results, we match the network setup of all three systems as closely as possible. We use a single orderer in \emph{solo} mode, ensuring that throughput is bound by the peer performance. A single anchor peer receives blocks from the orderer and broadcasts them to four endorsing peers. In the case of FastFabric and XOX, the broadcast includes the complete transaction validation metadata so endorsers can skip their own validation steps. FastFabric and XOX run an additional persistent storage server because in these cases the peers store their internal state database in-memory. The remaining four servers are used as clients\footnote{We do not use Caliper because it is not sufficiently high-performance to fully load our system.}. Spawning a total of 200 concurrent threads, they use the Fabric node.js SDK to send transaction proposals to the endorsing peers and consecutively submit them to the orderer. Each block created by the orderer contains 100 transactions. 

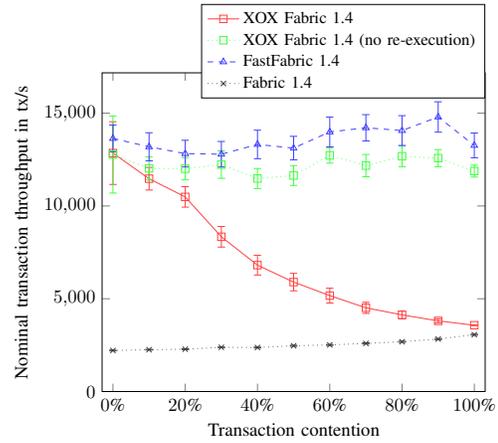
\begin{figure}
    \centering
    \begin{tikzpicture}[scale = 0.7]
        \begin{axis}[
        scaled ticks=false, 
        tick label style={/pgf/number format/fixed},
        legend style={at={(1,0.92)}, anchor=south east, font=\small},
        legend cell align=left,
	    enlarge x limits=0.03,
        ymin=0,
        width= \linewidth,
        bar width=0.1cm,
        ylabel= Nominal transaction throughput in tx/s,
        xlabel = {Transaction contention},
        xticklabel={\pgfmathprintnumber\tick\%},
        ]       
        
        \addplot[red!75, mark=square, error bars/.cd, y dir = both,y explicit] table[x=x, y=xox, y error=xox_err]{tikz/nominal_data.dat};
        \addplot[green!75, mark options={solid}, dotted, mark=square, error bars/.cd, y dir = both,y explicit,error bar style={solid}] table[x=x, y=xox2, y error=xox2_err]{tikz/nominal_data.dat};
        \addplot[dashed, blue!75, mark options={solid} ,mark=triangle,error bars/.cd, y dir = both,y explicit,error bar style={solid}] table[x=x, y=fastfabric,y error=ff_err]{tikz/nominal_data.dat};
        
        \addplot[dotted, mark options={solid},black!75,mark=x, error bars/.cd, y dir = both,y explicit,error bar style={solid}, error mark options={solid,rotate=90, mark size=2}] table[x=x,y=fabric,y error=fabric_err]{tikz/nominal_data.dat};
        \legend{XOX Fabric 1.4, XOX Fabric 1.4 (no re-execution),  FastFabric 1.4,Fabric 1.4}
        \end{axis}
        \end{tikzpicture}
    \caption{Impact of transaction conflicts on nominal throughput, counting both valid and invalid transactions.}
    \label{fig:nom_tp}
\end{figure}

All experiments run the same chaincode: A money transfer  from  one  account  to  another is simulated, reading from and writing to two keys in the state database, e.g. deducting 1 coin from \emph{account0} and adding 1 coin to \emph{account1}. We use the default endorsement  policy  of  accepting  a  single endorser signature. XOX's second execution phase integrates a Python virtual stack machine (VM) implemented in Go~\cite{Gpython}. We added a parameter to the VM to stop the stack machine after executing a certain amount of operations, emulating a \emph{gas} equivalent. We load a Python implementation of the Go chaincode into the VM and extract the call parameters from the transaction so that the logic between pre-order and post-order execution remains the same. Therefore, the only semantic difference between XO and OX is that OX operates on up-to-date state.

For each tested system, clients generate a randomized load with a specific contention factor by flipping a loaded coin for each transaction. Depending on the outcome, they either choose a previously unused account pair or the pair \emph{account0--account1} to create a RW set conflict. We scale the transaction contention factor from 0\% to 100\% in 10\% steps and run the experiment for each of the three systems. Every time, clients generate a total of 1.5 Million transaction. In the following, we will discuss XOX's throughput improvements under contention over both FastFabric and Fabric 1.4, and its overhead compared to FastFabric.

\subsection{Throughput}

We start by examining the nominal throughput of each system in Fig.~\ref{fig:nom_tp}. We measured the throughput of all transactions regardless of their validity. The effectively single-threaded validation/commitment pipeline of Fabric 1.4 creates results with little variance over time. The throughput increases slightly from about 2200 tx/s to 3000 tx/s the higher the transaction contention becomes, because Fabric discards invalid transactions, so their changes are not committed to the world state database. FastFabric follows the same trend, going from 13600 tx/s up to 14700 tx/s, although the relative throughput increase is not as pronounced because the database commit is cheaper, and there is  higher variance due to many parallel threads vying for resources at times.

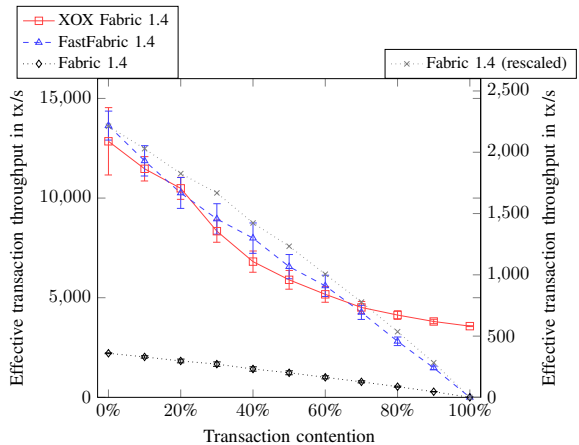
\begin{figure}
    \centering
    \begin{tikzpicture}[scale = 0.7]
        \begin{axis}[
        scaled ticks=false, 
        tick label style={/pgf/number format/fixed},
        legend style={at={(0,1)}, anchor=south, font=\small},
        legend cell align=left,
	    enlarge x limits=0.03,
        ymin=0,
        width= \linewidth,
        bar width=0.1cm,
        ylabel= Effective transaction throughput in tx/s,
        xlabel = {Transaction contention},
        xticklabel={\pgfmathprintnumber\tick\%},
        ]       
        
        \addplot[red!75, mark=square, error bars/.cd, y dir = both,y explicit] table[x=x, y=xox, y error=xox_err]{tikz/effective_data.dat};
        \addplot[dashed, blue!75, mark options={solid} ,mark=triangle,error bars/.cd, y dir = both,y explicit,error bar style={solid}] table[x=x, y=fastfabric,y error=ff_err]{tikz/effective_data.dat};
        \addplot[dotted, mark options={solid},black,mark=diamond, error bars/.cd, y dir = both,y explicit,error bar style={solid}, error mark options={solid,rotate=90, mark size=2}] table[x=x,y=fabric,y error=fabric_err]{tikz/effective_data.dat};
        \legend{XOX Fabric 1.4, FastFabric 1.4, Fabric 1.4}
        \end{axis}
        \begin{axis}[
        scaled ticks=false, 
        tick label style={/pgf/number format/fixed},
        legend style={at={(1.0,1)}, anchor=south, font=\small},
        legend cell align=left,
	    enlarge x limits=0.03,
        ymin=0,
        ymax=2600,
        width= \linewidth,
        bar width=0.1cm,
        ylabel= Effective transaction throughput in tx/s,
        axis y line*=right,
        axis x line=none
        ] 
        \addplot[dotted, mark options={solid},black!50,mark=x, error bars/.cd, y dir = both,y explicit,error bar style={solid}, error mark options={solid,rotate=90, mark size=2}] table[x=x,y=fabric]{tikz/effective_data.dat};
        \legend{Fabric 1.4 (rescaled)}
        \end{axis}
        \end{tikzpicture}
    \caption{Impact of transaction conflicts on effective throughput, counting only valid transactions. Fabric 1.4 scaled up for slope comparison (right y-axis).}
    \label{fig:eff_tp}
\end{figure}

We ran the experiments for XOX in two configurations to  understand the effects of different parts of our implementation on the overall throughput. First, we only included changes to the validation pipeline and the addition of the dependency analyzer but disabled transaction re-execution. Subsequently, we ran it again with all features enabled. The first configuration shows roughly the same behaviour as FastFabric, albeit with a small hit to overall throughput, ranging from 12000 tx/s up to 12700 tx/s. For higher contention ratios, the fully-featured configuration's throughput drops from 12800 tx/s to about 3600 tx/s, a third of its initial value. However, this is expected as more transactions need to be re-executed sequentially. Importantly, even under full contention, XOX performs better than Fabric 1.4.

Note that the nominal throughout is meaningless if blocks contain mostly invalid transactions. Therefore, we now discuss the effective throughput. In Fig.~\ref{fig:eff_tp}, we have eliminated all invalid transactions from the throughput measurements. Naturally, this means there is no change for the full XOX implementation, because it already produces only valid transactions. For better comparison of the three systems under contention, we normalized the projections of their plots. FastFabric and XOX follow the left y-axis while Fabric follows the right one. For up to medium transaction contention, all systems roughly follow the same slope. However, while both FastFabric and Fabric tend towards 0 tx/s in the limit of 100\% contention, XOX still reaches a throughput of about 3600 tx/s. At this point, all transaction in a submitted block have to be re-executed. This means, starting at 70\% contention, XOX surpasses all other systems in terms of effective throughput while maintaining comparable throughput before that threshold.

\begin{figure}
    \centering
    \hspace*{-0.8cm}
    \begin{tikzpicture}[scale = 0.7, xshift=10cm]
        \begin{axis}[
        scaled ticks=false,
        legend style={at={(1,1)}, anchor=north east, font=\small},
        legend cell align=left,
	    enlarge x limits=0.03,
        ymin=0,
        ymax=100,
        width= \linewidth,
        bar width=0.1cm,
        ylabel= Relative nominal load,
        xlabel = {Transaction contention},
        xticklabel={\pgfmathprintnumber\tick\%},
        yticklabel={\pgfmathprintnumber\tick\%}
        ]       
        
        \addplot[red!75, mark=x, error bars/.cd, y dir = both,y explicit] table[x=x, y=ex_ov, y error=ex_err]{tikz/nominal_data.dat};
        \addplot[dashed, blue!75, mark options={solid} ,mark=triangle,error bars/.cd, y dir = both,y explicit,error bar style={solid}] table[x=x, y=pip_ov, y error=pip_err]{tikz/nominal_data.dat};
        \legend{ post-order execution,dependency analyzer \& pipeline, test};
        \end{axis}
        \end{tikzpicture}
    \caption{Relative load overhead of separate XOX parts over FastFabric.}
    \label{fig:nom_load}
\end{figure}
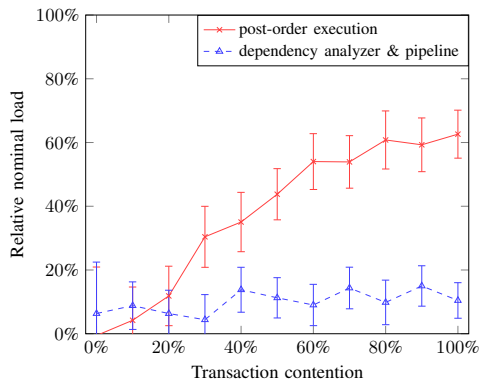

Even though it might seem like a corner case, this is a significant improvement. All experiments were run with a synthetic static workload where the level of contention stayed constant. However, in a real world scenario, users have two options when their transaction fails. They can abandon the transaction, or, more likely, submit the same transaction again. In a system with some amount of contention, conflicting transactions can accumulate over time by users resubmitting them repeatedly. This results in an unintended denial of service attack. In contrast, XOX guarantees liveness in every scenario.

\subsection{Overhead}

We now explore the overhead of XOX compared to FastFabric's nominal performance in Fig.~\ref{fig:nom_load}. We isolate the overhead introduced by adding the dependency analyzer and modifying the validation pipeline so that it can handle single transactions instead of complete blocks, as well as the overhead of the transaction re-execution by the python VM.

The blue dashed line shows that the dependency analyzer overhead is almost constant regardless of contention level. By minimizing spots in the validation/commitment pipeline that require a sequential processing order of transactions, we achieve an overhead of less than 15\% even when the dependency analyzer only releases one transaction at a time.

In contrast, the overhead of the re-execution step is more noticeable. For high contention, this step generates over 60\% of additional load. Yet, this also means that replacing the highly inefficient Python VM used in our proof of concept with a faster deterministic execution environment could dramatically increase XOX's throughput for high-contention workloads. This would push the threshold when XOX beats the other systems to lower fractions of conflicting transactions. Furthermore, the contention load used for these experiments presents the absolute worst case, where every conflicting transaction is touching the same state keys, resulting in a fully sequential re-execution of all transactions. However, if instead of a single account pair \emph{account0--account1} used for contentious transactions there was a second pair \emph{account2--account3}, the OX step would run in two concurrent threads instead of one. Even with this simple relaxation, the overhead would roughly be cut in half.

\section{Conclusion and Future Work}
In this work, we propose a novel hybrid execution model for Hyperledger Fabric consisting of a pre-order and a post-order execution step. This allows a trade-off between parallel transaction execution and minimal invalidation due to conflicting results. In particular, our solution can deal with highly skewed workloads where most transactions use only a small set of hot keys. Contrary to other post-order execution models, we support the use of external oracles in our secondary execution step. We show that the throughput of our implementation scales comparably to Fabric and FastFabric for low contention workloads, and surpasses them when transaction conflicts increase in frequency.

Now that all parts of the validation and commitment pipeline are decoupled and highly scalable, it remains to be seen in future work if the pipeline steps can be scaled across multiple servers to improve the throughput further.

 \bibliography{Mendeley}

\begin{thebibliography}{10}

\bibitem{Amiri2019b}
Mohammad~Javad Amiri, Divyakant Agrawal, and Amr~El Abbadi.
\newblock {CAPER: a cross-application permissioned blockchain}.
\newblock {\em Proceedings of the VLDB Endowment}, 12(11):1385--1398, 2019.

\bibitem{Amiri2019}
Mohammad~Javad Amiri, Divyakant Agrawal, and Amr El~Abbadi.
\newblock {ParBlockchain: Leveraging Transaction Parallelism in Permissioned
  Blockchain Systems}.
\newblock {\em Proceedings - International Conference on Distributed Computing
  Systems}, 2019-July:1337--1347, 7 2019.

\bibitem{Androulaki2018a}
Elli Androulaki, Artem Barger, Vita Bortnikov, Christian Cachin, Konstantinos
  Christidis, Angelo De~Caro, David Enyeart, Christopher Ferris, Gennady
  Laventman, Yacov Manevich, Srinivasan Muralidharan, Chet Murthy, Binh Nguyen,
  Manish Sethi, Gari Singh, Keith Smith, Alessandro Sorniotti, Chrysoula
  Stathakopoulou, Marko Vukoli{\'{c}}, Sharon~Weed Cocco, and Jason Yellick.
\newblock {Hyperledger Fabric: A Distributed Operating System for Permissioned
  Blockchains}.
\newblock {\em Proceedings of the Thirteenth EuroSys Conference on - EuroSys
  '18}, pages 1--15, 2018.

\bibitem{Chen2018}
Si~Chen, Jinyu Zhang, Rui Shi, Jiaqi Yan, and Qing Ke.
\newblock {A comparative testing on performance of blockchain and relational
  database: Foundation for applying smart technology into current business
  systems}.
\newblock In {\em International Conference on Distributed, Ambient, and
  Pervasive Interactions}, pages 21--34. Springer Verlag, 2018.

\bibitem{Dinh2017d}
Tien Tuan~Anh Dinh, Ji~Wang, Gang Chen, Rui Liu, Beng~Chin Ooi, and Kian-Lee
  Tan.
\newblock {BLOCKBENCH: A Framework for Analyzing Private Blockchains}.
\newblock {\em Proceedings of the 2017 ACM International Conference on
  Management of Data - SIGMOD '17}, pages 1085--1100, 2017.

\bibitem{Escobar2019}
Ian~Aragon Escobar, Eduardo~E.P. Alchieri, Fernando~Luís Dotti, and Fernando
  Pedone.
\newblock {Boosting concurrency in Parallel State Machine Replication}.
\newblock In {\em Proceedings of the 20th International Middleware Conference},
  pages 228--240. Association for Computing Machinery (ACM), 2019.

\bibitem{EWASM}
{Ethereum Community}.
\newblock {EWASM}, 2018.

\bibitem{Gpython}
{go-python}.
\newblock {Python 3.4 interpreter implementation for Golang}.

\bibitem{Gorenflo2020}
Christian Gorenflo, Stephen Lee, Lukasz Golab, and Srinivasan Keshav.
\newblock {FastFabric: Scaling Hyperledger Fabric to 20,000 Transactions per
  Second}.
\newblock In {\em International Journal of Network Management}. John Wiley and
  Sons Ltd, 2 2020.

\bibitem{Nasirifard2019}
Pezhman Nasirifard, Ruben Mayer, and Hans-Arno Jacobsen.
\newblock {FabricCRDT: A Conflict-Free Replicated Datatypes Approach to
  Permissioned Blockchains}.
\newblock In {\em Proceedings of the 20th International Middleware Conference},
  pages 110--122. Association for Computing Machinery (ACM), 2019.

\bibitem{Schneider1990}
Fred~B. Schneider.
\newblock {Implementing Fault-Tolerant Services Using the State Machine
  Approach: A Tutorial}.
\newblock {\em ACM Computing Surveys (CSUR)}, 22(4):299--319, 1990.

\bibitem{Sharma2019}
Ankur Sharma, Felix~Martin Schuhknecht, Divyakant Agrawal, and Jens Dittrich.
\newblock {Blurring the Lines between Blockchains and Database Systems}.
\newblock In {\em Proceedings of the 2019 International Conference on
  Management of Data}, pages 105--122, 2019.

\bibitem{Sorniotti}
Alessandro Sorniotti, Angelo De~Caro, Baohua Yang, Binh Nguyen, Manish Sethi,
  Vukolic Marko, Sheehan Anderson, Srinivasan Muralidharan, and Parth Thakkar.
\newblock {Fabric Proposal: Enhanced Concurrency Control}, 2017.

\bibitem{Founder2017}
Gavin Wood.
\newblock {Ethereum: a Secure Decentralised Generalised Transaction Ledger},
  2014.

\bibitem{Zhang2019}
Shenbin Zhang, Ence Zhou, Bingfeng Pi, Jun Sun, Kazuhiro Yamashita, and
  Yoshihide Nomura.
\newblock {A Solution for the Risk of Non-deterministic Transactions in
  Hyperledger Fabric}.
\newblock {\em IEEE International Conference on Blockchain and Cryptocurrency
  (ICBC)}, pages 253--261, 2019.

\end{thebibliography}
 \bibliographystyle{plain}

\end{document}